\newtheorem{theorem}{Theorem}
\newtheorem{proposition}{Proposition}
\newtheorem{lemma}{Lemma}
\newtheorem{claim}{Claim}
\DeclareMathOperator{\dist}{dist}
\newcommand{\ftstc}{\textsc{Fault-Tolerant Path}}
\newcommand{\ftstcshort}{\textsc{FTP}}
\newcommand{\dsl}{\textsc{Directed Steiner Linkage}} 
\newcommand{\SSS}{\ensuremath{|S|}}
\newcommand{\UUU}{\ensuremath{|U|}}
\newcommand{\ALB}{\ensuremath{a}}
\newcommand{\BUB}{\ensuremath{b}}
\newcommand{\Oh}{\ensuremath{\mathcal{O}}}
\newcommand{\noopsort}[2]{#2}
\newlength{\RoundedBoxWidth}
\newsavebox{\GrayRoundedBox}
\newenvironment{GrayBox}[1]%
   {\setlength{\RoundedBoxWidth}{.93\textwidth}
    \def\boxheading{#1}
    \begin{lrbox}{\GrayRoundedBox}
       \begin{minipage}{\RoundedBoxWidth}}%
   {   \end{minipage}
    \end{lrbox}
    \begin{center}
    \begin{tikzpicture}%
       \node(Text)[draw=black!20,fill=white,rounded corners,%
             inner sep=2ex,text width=\RoundedBoxWidth]%
             {\usebox{\GrayRoundedBox}};
        \coordinate(x) at (current bounding box.north west);
        \node [draw=white,rectangle,inner sep=3pt,anchor=north west,fill=white] 
        at ($(x)+(6pt,.75em)$) {\boxheading};
    \end{tikzpicture}
    \end{center}}
\newenvironment{defproblemx}[2][]{\noindent\ignorespaces%
                                \FrameSep=6pt%
                                \parindent=0pt%
                \vspace*{-1.5em}
                \ifthenelse{\isempty{#1}}{%
                  \begin{GrayBox}{\textsc{#2}}%
                }{%
                  \begin{GrayBox}{\textsc{#2}  parameterized by~{#1}}%
                }
                \begin{tabular*}{\textwidth}{@{\hspace{.1em}} >{\itshape} p{1.8cm} p{0.8\textwidth} @{}}%
            }{
                \end{tabular*}%
                \end{GrayBox}%
                \ignorespacesafterend
            }  
\newcommand{\defproblem}[3]{
  \begin{defproblemx}{#1}
    Input:  & #2 \\
    Question: & #3
  \end{defproblemx}
}%
\definecolor{darkgreen}{rgb}{0, 0.7, 0}
\begin{document}

\title{When does FTP become FPT?}

\author{Matthias Bentert\thanks{University of Bergen, Norway. \texttt{\{Matthias.Bentert, Fedor.Fomin, Petr.Golovach\}@uib.no}} \and Fedor V. Fomin \addtocounter{footnote}{-1}\footnotemark \and Petr A. Golovach\addtocounter{footnote}{-1}\footnotemark \and 
Laure Morelle\thanks{LIRMM, Université de Montpellier, Montpellier, France. \texttt{laure.morelle@lirmm.fr}}
}
\date{}

\maketitle

\begin{abstract}
In the problem \ftstc{} (\ftstcshort), we are given an edge-weighted directed graph $G = (V, E)$, a subset $U \subseteq E$ of vulnerable edges, two vertices $s, t \in V$, and integers $k$ and $\ell$. The task is to decide whether there exists a subgraph~$H$ of~$G$ with total cost at most~$\ell$ such that, after the removal of any $k$ vulnerable edges, $H$ still contains an $s$-$t$-path.
We study whether \ftstc{} is fixed-parameter tractable (FPT) and whether it admits a polynomial kernel under various parameterizations. Our choices of parameters include:
the number of vulnerable edges in the input graph,  
the number of safe (i.e, invulnerable) edges in the input graph, 
the budget~$\ell$,
the minimum number of safe edges in any optimal solution, 
the minimum number of vulnerable edges in any optimal solution, 
the required redundancy~$k$, and natural above- and below-guarantee parameterizations.
We provide an almost complete description of the complexity landscape of \ftstcshort{} for these parameters.
\end{abstract}

\section{Introduction}
Fault-tolerant edge connectivity refers to the ability of a network to maintain connectivity even after some of its edges fail. Recently, there has been significant interest in a graph model where the edge set is partitioned into two distinct categories: safe edges $S$ and vulnerable edges 
$U$  \cite{AHM22,AHMS22,BentertSS23,BCHI21}. Safe edges represent robust connections that are less likely to fail, while vulnerable edges are more prone to disruptions or failures. Our paper contributes to the expanding body of research on this intriguing model.

Adjiashvili et al.~\cite{AHMS22} introduced the following fault-tolerant generalization of
Minimum-Cost Edge-Disjoint Paths (EDP), 
the classic network connectivity problem whose aim is to find edge-disjoint paths of minimum total cost between two given terminals. 
The input consists of an edge-weighted (directed) graph~${G = (V,E)}$, a subset $U\subseteq E$ of vulnerable edges, two vertices~$s,t \in V$, and integers~$k$ and~$\ell$.
The task is to decide whether there is a subgraph~$H$ of~$G$ of cost at most~$\ell$ such that after removing any~$k$ vulnerable edges from~$H$, there still remains an~$s$-$t$-path. 
More formally, we study the following problem.

\defproblem{\ftstc{} (\ftstcshort)}
{A (directed) graph $G = (V,E=S \cup U)$ where $S$ and $U$ are disjoint sets of \emph{safe} and \emph{vulnerable} edges, respectively, an edge-weight function~$w \colon E \rightarrow \mathds{N}_{\ge 1}$, two vertices~$s,t \in V$, and integers~$k$ and~$\ell$.}
{Is there a set~$K \subseteq E$ of edges of total cost (sum of weights) at most~$\ell$ such that for each set~$F \subseteq K \cap U$ with~$|F| \leq k$, it holds that $G[K \setminus F]$ contains an $s$-$t$-path?}

\vskip-2mm
Note that when $G$ only contains vulnerable edges, i.e., when $E = U$, then \ftstcshort{} is equivalent to deciding whether $G$ contains $k+1$ edge-disjoint $s$-$t$-paths of total cost of at most~$\ell$. In other words, this is equivalent to determining whether a flow of~$k+1$~units can be sent from~$s$ to~$t$ with a cost of at most~$\ell$---assuming all edges in the graph have unit capacity. While the minimum cost capacitated flow is solvable in polynomial time, Adjiashvili et al. showed that \ftstcshort{} is NP-complete~\cite{AHMS22}. They also provided a polynomial-time algorithm for \ftstcshort{} when~${k=1}$. However, the complexity of the problem for any fixed~${k \geq 2}$ remains unknown. Additionally, they presented an $n^{\Oh(k)}$-time algorithm for directed acyclic graphs. Their work is the starting point for our research.

\subsection{Our results}
We investigate the parameterized complexity of \ftstc{} with respect to the following list of natural parameters.
We start with the parameters~$k$ and~$\ell$.
Next, we consider the numbers~$|S|$ and~$|U|$ of safe and vulnerable edges in the input graph as well as the minimum 
numbers~$p$ and~$q$ of safe and vulnerable edges, respectively, in a solution. 
Finally, we also investigate a natural ``above-guarantee'' and a ``below-guarantee'' parameter.
First of all, any solution~$K \subseteq E$ of edges contains an $s$-$t$-path, and thus we have a trivial no-instance if $\ell<\dist(s,t)$, the distance from $s$ to $t$ in the weighted graph.
Hence, we can assume that $\ell\geq\dist(s,t)$, and 
this gives rise to the above-guarantee (that is, above~$\dist(s,t)$) parameter~$\ALB = \ell - \dist(s,t)$. 
To define the last (below-guarantee) parameter, we need the following lemma, which directly follows from \cite[Lemma 3]{AHMS22}. We will use this lemma (sometimes implicitly) in several places. 
 
\begin{lemma}
    \label{lem:sol}
    An instance~$I=(G=(V,S \cup U),s,t,k,\ell)$ of \ftstcshort{} is a yes-instance if and only if there are~$k+1$ $s$-$t$-paths $P_1,P_2,\ldots,P_{k+1}$ with~$P_i = (V_i,E_i)$ in~$G$ such that~$E_i \cap E_j \subseteq S$ for all~$i \neq j \in [k+1]$ and~$ \sum_{e \in \bigcup_{i \in [k+1]} E_i} w(e) \leq \ell$.
\end{lemma}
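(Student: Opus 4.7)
The plan is to prove both implications, using a flow/Menger-style argument for the nontrivial direction.

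For the easy direction ($\Leftarrow$), I will take $K = \bigcup_{i \in [k+1]} E_i$. Its total cost is at most $\ell$ by hypothesis. Now fix any $F \subseteq K \cap U$ with $|F| \le k$. Since any two paths $P_i, P_j$ share only safe edges, every vulnerable edge $e \in F$ belongs to at most one of the $P_i$'s. Hence $F$ meets at most $k$ of the $k+1$ paths, so some $P_i$ is entirely contained in $K \setminus F$, giving the required $s$-$t$-path in $G[K \setminus F]$.

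For the harder direction ($\Rightarrow$), I will use max-flow / min-cut. Given a valid solution $K$, build a capacitated network on $G[K]$ by assigning capacity $1$ to every vulnerable edge in $K$ and capacity $k+1$ (equivalently, $\infty$) to every safe edge. The key claim is that the minimum $s$-$t$-cut in this network has value at least $k+1$. Indeed, a cut of value at most $k$ cannot contain any safe edge (they each contribute $k+1$), so it would consist of at most $k$ vulnerable edges; removing them from $K$ would disconnect $s$ from $t$, contradicting the assumption that $K$ is a valid solution. By the integral max-flow min-cut theorem, there exists an integral $s$-$t$-flow of value $k+1$, which I can decompose into $k+1$ $s$-$t$-paths $P_1, \ldots, P_{k+1}$. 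Because each vulnerable edge has capacity $1$, no two of these paths share any vulnerable edge, i.e., $E_i \cap E_j \subseteq S$ for all $i \neq j$. Finally, $\bigcup_i E_i \subseteq K$, so $\sum_{e \in \bigcup_i E_i} w(e) \le \sum_{e \in K} w(e) \le \ell$.

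The main (minor) obstacle is making sure the flow decomposition yields simple paths rather than walks with cycles, and that it respects the edge-set description used in the statement; this is handled by the standard integral flow decomposition, possibly after removing zero-cost cycles, together with the observation that capacities on vulnerable edges force edge-disjointness on $U$ while allowing safe edges to be reused by up to $k+1$ paths. Everything else is a direct translation between the subgraph formulation of a solution and a system of $k+1$ paths that overlap only on $S$.
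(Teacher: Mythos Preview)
Your proof is correct. Both directions are argued properly: the pigeonhole argument for~$(\Leftarrow)$ is exactly right, and for~$(\Rightarrow)$ your capacitated min-cut/max-flow argument on~$G[K]$ with capacity~$1$ on vulnerable edges and~$k+1$ on safe edges is a clean way to extract the~$k+1$ paths. The flow-decomposition point you flag is indeed minor: the standard integral decomposition into at most~$m$ simple $s$-$t$-paths and cycles, after discarding the cycles, already yields the required~$k+1$ simple paths, and the capacity-$1$ constraint on vulnerable edges forces pairwise disjointness on~$U$.

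For comparison: the paper does not give its own proof of this lemma at all; it simply records that the statement follows directly from \cite[Lemma~3]{AHMS22}. So your route is genuinely different in that it is a self-contained argument rather than a citation. What your approach buys is independence from the external reference and an explicit mechanism (the capacitated flow) that is reused later in the paper anyway---it is precisely the network used to define the below-guarantee parameter~$b$. What the paper's approach buys is brevity. Both are fine; if you were writing this up, your proof could replace the citation with no loss.
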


Given an instance of \ftstc, we define an instance of \textsc{Minimum-Cost Flow} as follows.  
We set the capacity of each safe edge~$e \in S$ to be~$k+1$ and the capacity of each vulnerable edge~$e \in U$ to be one.
Then, we ask for a minimum-cost~$(k+1)$-flow from $s$ to $t$.
We denote the optimum cost of such a flow by~$C$.
We can assume, without loss of generality, that an optimal solution uses integer capacities on all edges and that such a flow solution corresponds to~$k+1$ paths in~$G$ that do not intersect in any vulnerable edges.
By \cref{lem:sol},  there exists a solution of total cost at most~$C$, that is, if~$\ell \geq C$, then the instance~$I$ is a yes-instance.
Note, however, that this is not an if and only if as the flow ``pays'' for each unit of flow that passes through a safe edge, whereas in \ftstc, we only ``pay'' once for each edge.
This leads to the below-guarantee parameter~$\BUB = C - \ell$.
 
\medskip\noindent\textbf{Relations between parameters.}
Before describing our result, let us discuss the relationships between the parameters (see \Cref{fig:results}), most of which are straightforward.
First, the number $p$ of safe edges in the solution is clearly upper-bounded by both the number of safe edges in the input graph and the total cost of edges in the solution (since each edge has a weight of at least one). Hence, if~$p > \ell$, the instance is a trivial no-instance. An analogous statement holds for vulnerable edges.
The remaining parameter bounds are summarized in the following proposition.

\begin{proposition}
    \label{obs:parameters}
If the values of the parameters for an instance of \ftstcshort{} do not satisfy the inequality 
 $k < q \leq 2\ALB \leq 2\ell$, then the considered instance of \ftstcshort{} can be solved in polynomial time.
\end{proposition}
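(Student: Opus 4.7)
The plan is to show that each possible failure of the chain can be detected or structurally ruled out in polynomial time. I start with two degenerate cases. If $k = 0$, then \ftstcshort{} reduces to finding any $s$-$t$-path of weight at most $\ell$, solvable by Dijkstra's algorithm, so I assume $k \geq 1$. I compute $\dist(s,t)$: if $\ell < \dist(s,t)$, every candidate solution costs more than $\ell$ and I return no; otherwise $\ALB = \ell - \dist(s,t) \geq 0$, so the rightmost inequality $2\ALB \leq 2\ell$ is automatic.

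For the leftmost inequality I observe that $q \leq k$ if and only if $G[S]$ contains an $s$-$t$-path of weight at most $\ell$. Indeed, such a safe path is a valid solution with zero vulnerable edges, giving $q = 0 \leq k$. Conversely, any solution $K$ using at most $k$ vulnerable edges must survive the deletion of all of them, so $K \cap S$ contains an $s$-$t$-path of cost at most $\ell$. A single shortest-path computation in $G[S]$ therefore either certifies the instance as a yes-instance or lets me conclude $q > k$ for every valid solution.

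The substantive step is proving $q \leq 2\ALB$ for every yes-instance (with $k \geq 1$). Take a solution $K$ minimizing the number of vulnerable edges and apply \cref{lem:sol} to decompose $K$ into $k+1$ $s$-$t$-paths $P_1, \dots, P_{k+1}$ pairwise sharing only safe edges. Setting $x = w(K \cap S)$ and $y = w(K \cap U)$, each $P_i$ has weight at least $\dist(s,t)$; since every safe edge appears in at most $k+1$ paths and every vulnerable edge in at most one, summing gives
\[
(k+1)\,\dist(s,t) \;\leq\; \sum_{i=1}^{k+1} w(P_i) \;\leq\; (k+1)\,x + y .
\]
Combined with $x + y \leq \ell = \dist(s,t) + \ALB$, subtracting the inequalities yields $k\,x \geq k\,\dist(s,t) - \ALB$, hence $y \leq \ALB + \ALB/k \leq 2\ALB$ for $k \geq 1$. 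Since edge weights are positive integers, $q = |K \cap U| \leq y \leq 2\ALB$.

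Putting everything together, the polynomial-time preprocessing---the $k=0$ handling, the $\ell < \dist(s,t)$ check, the safe-path check, and a final check whether $k \geq 2\ALB$---resolves every chain-failing instance. For the final check, any yes-instance surviving the safe-path check has $q > k$, so $k \geq 2\ALB$ combined with the bound $q \leq 2\ALB$ from the previous paragraph would force $q \leq 2\ALB \leq k$, contradicting $q > k$; hence such an instance must be a no-instance and I return no. Every yes-instance surviving this preprocessing therefore satisfies $k < q \leq 2\ALB \leq 2\ell$. The main technical hurdle is the bound $q \leq 2\ALB$, which combines the decomposition from \cref{lem:sol} with careful double-counting of safe edges across the $k+1$ paths; the other steps are standard shortest-path computations.
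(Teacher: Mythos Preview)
Your proof is correct. The overall shape matches the paper---handle $q\le k$ via a shortest path in $G[S]$, argue $q\le 2\ALB$ from the path decomposition of \cref{lem:sol}, and note $\ALB\le\ell$ trivially---but the central step differs. The paper singles out one path $P_{k+1}$ and bounds the vulnerable edges outside it by $\ALB$ (since $P_{k+1}$ already costs at least $\dist(s,t)$) and, symmetrically, the vulnerable edges inside $P_{k+1}$ by $\ALB$, giving $q\le 2\ALB$. You instead sum all $k+1$ path weights and double-count, obtaining $(k+1)\dist(s,t)\le (k+1)x+y$ and hence $y\le \ALB(1+1/k)$; for $k\ge 2$ this is genuinely sharper than the paper's $2\ALB$, though both coincide at $k=1$ and either suffices for the proposition. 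Your write-up is also more explicitly algorithmic: you spell out that after the safe-path test fails one may additionally check $k\ge 2\ALB$ and return no, whereas the paper simply observes that the two rightmost inequalities hold structurally for every yes-instance, leaving only the $q\le k$ case to handle. One minor point worth making explicit: \cref{lem:sol} as stated guarantees paths in $G$, not in a given solution $K$; you are implicitly applying it to the subinstance $G[K]$, which is valid since $K$ is itself a solution there.
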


\begin{proof}
     Note that if~$q \leq k$, then after the removal of the vulnerable edges from a solution, we should still have an $s$-$t$-path containing only safe edges.
     Hence, the problem is equivalent to the case where~$q=0$ which can be solved in polynomial time by computing a shortest $s$-$t$-path in the graph induced by all safe edges.

     We next show that~$q \leq 2\ALB$.
     According to \cref{lem:sol}, any optimal solution consists of~$k+1$ paths~$P_1,P_2,\ldots,P_{k+1}$ that pairwise intersect only in safe edges and are of total cost at most~$\ell$.
     If one of these paths only consists of safe edges, then all paths are equal and the solution does not contain any vulnerable edges.
     In this case~$0=q\leq \ALB$.
     Otherwise, note that each vulnerable edge in the solution is contained in exactly one path~$P_i$ and each such path contains at least one vulnerable edge.
     Let~$q_i$ be the number of vulnerable edges in~$P_i$ for each~$i\in [k+1]$.
     Note that~$\sum_{i \in [k+1]}q_i = q$ and~$\sum_{i \in [k]} q_i \leq \ALB$ as path~$P_{k+1}$ contains edges of total cost at least~$\dist(s,t)$ and none of the vulnerable edges in any other path is contained in~$P_{k+1}$.
     Since~$q_{k+1} \leq \ALB$ (for the same reason that~$\sum_{i \in [k]} q_i \leq \ALB$), it follows that~$q = \sum_{i \in [k+1]}q_i \leq 2\ALB$.
     Finally, it holds that~$\ALB = \ell - \dist(s,t) \leq \ell$ as~$\dist(s,t) \geq 0$.
\end{proof}

Our results are depicted in \cref{fig:results}.
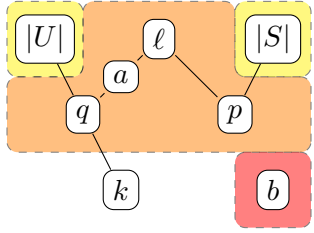
\begin{figure}[t!]
    \centering
    \begin{tikzpicture}
        \fill [opacity=0.5,orange,draw=black,dashed, rounded corners]
        (1.5,.5) -- (1.5,1.5) -- (.5,1.5) -- (.5,2.5) -- (-1.5,2.5) -- (-1.5,1.5) -- (-2.5,1.5) -- (-2.5,.5) -- cycle;

        \fill [opacity=0.5,yellow,draw=black,dashed, rounded corners]
        (1.5,1.5) -- (.5,1.5) -- (.5,2.5) -- (1.5,2.5) -- cycle;
        
        \fill [opacity=0.5,yellow,draw=black,dashed, rounded corners]
        (-2.5,1.5) -- (-1.5,1.5) -- (-1.5,2.5) -- (-2.5,2.5) -- cycle;
        
        \fill [opacity=0.5,red,draw=black,dashed, rounded corners]
        (1.5,-.5) -- (1.5,.5) -- (.5,.5) -- (.5,-.5) -- cycle;

        \node[draw, fill=white, rectangle, rounded corners]  at(-1,0) (k) {$k$};

        \node[draw, fill=white, rectangle, rounded corners]  at(1,0) (b) {$\BUB$};

        \node[draw, fill=white, rectangle, rounded corners]  at(.5,1) (p) {$p$};
        
        \node[draw, fill=white, rectangle, rounded corners]  at(-1.5,1) (q) {$q$} edge(k); 
        
        \node[draw, fill=white, rectangle, rounded corners]  at(-1,1.5) (a) {$\ALB$} edge(q);
        
        \node[draw, fill=white, rectangle, rounded corners] at(1,2) (S) {\SSS} edge(p);

        \node[draw, fill=white, rectangle, rounded corners]  at(-.5,2) (l) {$\ell$} edge(p) edge(a);

        \node[draw, fill=white, rectangle, rounded corners]  at(-2,2) (U) {\UUU} edge(q);
   \end{tikzpicture}
    \caption{Overview of our results. 
    An edge between two parameters~$\alpha$ and~$\beta$, where~$\beta$ is above~$\alpha$, indicates that the value of~$\beta$ in any instance will always be at most a constant times larger than the value of~$\alpha$ in that instance (or the problem can be solved trivially). Any hardness result for~$\beta$ immediately implies the same hardness result for~$\alpha$, and any positive result for~$\alpha$ immediately implies the same positive result for~$\beta$.
    The red box indicates that the parameter leads to para-NP-hardness (NP-hardness for constant parameter values).
    An orange box indicates that the problem is in XP and W[1]-hard and a yellow box indicates that the parameterized problem is fixed-parameter tractable, but does not admit a polynomial kernel.
   }\vskip-3mm
    \label{fig:results}
\end{figure}
As can be seen, we provide an almost complete description of the parameterized complexity landscape of \ftstcshort{} for our selection of natural parameters.
The only remaining question is whether \ftstcshort{} is para-NP-hard or contained in XP when parameterized by~$k$. 
Notice that \ftstcshort{} can be considered for both directed and undirected graphs, and there is no straightforward reduction for   \ftstcshort{} on undirected graphs to the directed case (say, by replacing each undirected edge by a pair of opposite directed ones as for many other problems).
We state all our algorithms for the directed case. However, they can be easily adapted for undirected graphs. Symmetrically, we show our algorithmic lower bounds for unweighted undirected graphs and the results also hold for directed graphs.

\subsection{Related work}
The model of the classic \textsc{$s$-$t$-Path} and \textsc{$s$-$t$-Flow} problems with safe and vulnerable edges is due to Adjiashvili et al.~\cite{AHMS22}, who studied the approximability of these problems. Subsequently, generalizations were also studied \cite{BCGI22,BCHI21,CJ22}.
Closest to our work is the paper by Bentert at al.~\cite{BentertSS23}, who studied the parameterized complexity of computing a fault-tolerant spanning tree.
They provide a fairly complete overview over the parameterized complexity landscape both in terms of natural (problem-specific) as well as structural parameters.

\section{Preliminaries}

In this section, we introduce the notation, definitions, and fundamental concepts used throughout the paper.
For a positive integer $n \in \mathbb{N}$, we denote by $[n]$ the set~$\{1, 2, \dots, n\}$. A \emph{graph} is a pair $G = (V, E)$, where $V$ is a finite set of vertices and $E$ is a set of edges. In an \emph{undirected graph}, edges are unordered pairs of vertices, i.e., ${E \subseteq \{\{u, v\} \mid u, v \in V, u \neq v\}}$. In a \emph{directed graph}, edges are ordered pairs of vertices, i.e., ${E \subseteq \{(u, v) \mid u, v \in V, u \neq v\}}$. Throughout the paper, we use $n$ and $m$ to denote the number of vertices and edges, respectively, if the considered graph is clear from the context. 
The underlying undirected graph of a directed graph~$G = (V,E)$ is the graph where each directed edge is replaced by the corresponding undirected edge (and duplicates are eliminated).
That is, the underlying undirected graph is~${(V,\{\{u,v\} \mid (u,v) \in E \lor (v,u)\in E\})}$. 
A \emph{path} in a graph is a sequence~$(v_1, v_2, \dots, v_k)$ of vertices such that each consecutive pair is connected by an edge, that is, $\{v_i, v_{i+1}\} \in E$ (or~$(v_i,v_{i+1}) \in E$ in the directed case).
A \emph{weighted graph} is a graph~$G = (V, E)$ equipped with a weight function~$w: E \rightarrow \mathbb{Q}_{\geq 0}$ that assigns a non-negative rational to each edge.

\medskip\noindent\textbf{Parameterized Complexity.} Parameterized complexity is a framework for analyzing computational problems based on two inputs: the main input size $n$ and a parameter $k$. A problem is said to be \emph{fixed-parameter tractable} (FPT) if it can be solved in $f(k) \cdot n^{\Oh(1)}$ time, where $f$ is a computable function depending only on $k$. The class $\textsf{FPT}$ contains all such problems.

The class $\textsf{XP}$ consists of problems solvable in time $n^{f(k)}$ for some computable function~$f$. A problem is $\textsf{W[1]}$-\emph{hard} if it is at least as hard as the hardest problems in the class $\textsf{W[1]}$, making it unlikely to be fixed-parameter tractable. A problem is \emph{para-NP-hard} if it remains NP-hard even when the parameter~$k$ is a constant.

A \emph{kernelization} algorithm reduces any instance $(I, k)$ of a parameterized problem to an equivalent instance $(I', k')$ in polynomial time, where the size of~$I'$ and $k'$ depend only on~$k$. If this size is bounded by a polynomial in $k$, the problem admits a \emph{polynomial kernel}.
It is known that a decidable problem is in FPT if and only if it admits a kernel of any size.
Hence, problems admitting a polynomial kernel are a subclass of problems in FPT.
To show that a problem does not admit a polynomial kernel (under standard complexity-theoretic assumptions), one can use polynomial parameter transformations.
A \emph{polynomial parameter transformation} is a polynomial-time reduction from one parameterized problem to another, such that the parameter in the new instance is bounded by a polynomial in the original parameter.
If the original problem does not admit a polynomial parameter, then the problem reduced to also does not have a polynomial kernel.
We refer to the textbook by Cygan et al.~\cite{bluebook} for more information on kernelization and parameterized complexity.

Finally, we introduce three problems that are important later.

\medskip\noindent\textbf{Hitting Set.} Given a finite set $\mathcal{U}$ (the universe), a family $\mathcal{F}$ of subsets of $\mathcal U$, and an integer~$d$, determine whether there exists a subset $H \subseteq \mathcal{U}$ of size at most~$d$ such that $H$ intersects every set in $\mathcal{F}$ (i.e., $H \cap F \neq \emptyset$ for each~$F \in \mathcal{F}$).

\medskip\noindent\textbf{Steiner Tree.} In the \emph{Steiner Tree} problem, we are given an undirected edge-weighted graph $G = (V, E, w)$, a set of terminals $S \subseteq V$, and an integer $d$. The question is whether there is a tree $T = (V_T, E_T)$ in $G$ that spans all vertices in~$S$ (i.e., $S \subseteq V_T$) with total cost at most~$d$, that is, $w(T) = \sum_{e \in E_T} w(e)\le d$.

\medskip\noindent\textbf{Biclique.} A \emph{biclique} in a bipartite graph $G = (U, V, E)$ is a complete bipartite subgraph~$K_{a,b}$ where every vertex in a subset $A \subseteq U$ of size~$a$ is connected to every vertex in a subset $B \subseteq V$ of size~$b$. Given a bipartite graph and an integer~$d$, the \emph{Biclique} problem asks whether~$G$ contains a~$K_{d,d}$.

\section{Algorithmic Results}
In this section, we show our different FPT and XP algorithms.
We mention that all algorithms can handle edge weights and directed graphs.
We start with the two parameters related to the number of vulnerable edges ($q$ and~$|U|$).
Therein, we will use the following problem, which we call \dsl{} and which seems to not have been studied in the literature before.

\defproblem{\dsl}
{A directed graph $G = (V,A)$, an edge-weight function~$w \colon A \rightarrow \mathds{Q}_{\geq 0}$, two multisets~$S$ and~$T$ of vertices with~$|S|=|T|$, and an integer~$\ell$.}
{Is there a set~$K \subseteq A$ of edges of total cost at most~$\ell$ and a bijection~$f \colon S \rightarrow T$ such that for each vertex~$s \in S$, there is a directed path from~$s$ to~$f(s) \in T$ in~$G[K]$?}

We suspect that \dsl{} is of independent interest.
We start by showing that \dsl{} is fixed-parameter tractable when parameterized by the number~$|S|$ of terminals in each of the two terminal sets.
For that algorithm, we first state a helpful structural lemma regarding the structure of an optimal solution.

\begin{lemma}
    \label{lem:dslforest}
    If an instance~$I$ is a yes-instance of \dsl, then there exists a solution whose underlying undirected graph is a forest.
\end{lemma}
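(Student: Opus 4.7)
The plan is to start from any valid solution $K_0$ (of cost at most~$\ell$, which exists since $I$ is a yes-instance) and iteratively delete edges whose removal still leaves the linkage feasible for some bijection, until reaching an inclusion-minimal valid solution $K \subseteq K_0$. Since weights are non-negative, $K$ has cost at most $\ell$ and is itself a valid solution; it thus suffices to show that the underlying undirected graph of any such inclusion-minimal $K$ is a forest.

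Fix a witness for $K$: a bijection $f \colon S \to T$ and simple directed paths $P_s$ from $s$ to $f(s)$ in $G[K]$ for each $s \in S$. By minimality, every edge of $K$ must be used by at least one $P_s$ (otherwise an unused edge could be deleted). Suppose for contradiction that the underlying undirected graph of $K$ contains an undirected cycle~$C$. The first observation is that if some edge $e = (u,v)$ of $C$ admits a $u$-to-$v$ directed walk in $G[K \setminus \{e\}]$, then every $P_s$ traversing $e$ can be rerouted through that walk and then reduced to a simple subpath, so the same bijection $f$ witnesses the validity of $K \setminus \{e\}$, contradicting minimality. Hence we may assume that for every edge $e = (u,v)$ on $C$, deleting $e$ destroys $u$-to-$v$ reachability in $G[K]$.

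Under this strong unreachability condition, I would derive a contradiction by passing to the bipartite reachability graph $B(K)$ on parts $S$ and $T$ (with an edge $(s,t)$ whenever $t$ is reachable from $s$ in $G[K]$): since $K$ is valid, $B(K)$ has a perfect matching, and I plan to show via Hall's condition that after deleting some cleverly chosen edge $e$ of $C$, $B(K \setminus \{e\})$ still admits a perfect matching, yielding a new bijection that contradicts minimality. Orienting $C$ cyclically, either $C$ itself is a directed cycle or it contains a sink vertex $w$ whose two incident cycle edges both point inward (and, symmetrically, a source vertex). In the sink case I would delete one of the two cycle edges into $w$, and in the directed-cycle case any edge of $C$, exploiting the strong connectivity of $C$ together with the paths that enter and leave $C$ to rearrange the bijection.

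The main obstacle will be making the Hall-based argument precise. The two cycle edges into a sink $w$ may serve distinct source--sink pairs in the original $f$, and in the directed-cycle case each cycle edge may be the unique conduit through which some $S$-vertex reaches a particular $T$-vertex; after deleting such an edge the reachability sets of several $S$-vertices may simultaneously shrink. One therefore has to track these shrinkages carefully and use the surviving cycle edges, together with the rest of $K$, to verify that no subset of $S$ violates Hall's condition in $B(K \setminus \{e\})$. I expect this step to require a careful case analysis based on the orientations of the edges on $C$ and on how the paths $P_s$ interleave along $C$.
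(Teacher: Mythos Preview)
Your proposal is not a proof but a plan, and you yourself flag the gap: the Hall-based rematching step is only sketched, and the ``careful case analysis based on the orientations of the edges on~$C$ and on how the paths $P_s$ interleave along~$C$'' is never carried out. The argument stops exactly where the real work would begin, and nothing you have written rules out a Hall violation after deleting a cycle edge --- as you note, the reachability sets of several sources may shrink simultaneously.

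The paper avoids all of this with a much simpler device: it recasts \dsl{} as an integer flow problem (one unit of supply at each occurrence in~$S$, one unit of demand at each occurrence in~$T$, unbounded capacities on the chosen edges). A set~$K$ is then a solution if and only if~$G[K]$ supports such a feasible flow, and in an inclusion-minimal~$K$ every edge carries positive flow. If the underlying undirected graph of~$K$ contains a cycle~$C$, orient~$C$ cyclically, let~$d$ be the minimum flow on any edge of~$C$, subtract~$d$ from every forward edge and add~$d$ to every backward edge of~$C$. The supply--demand balance at each vertex is preserved, so the modified flow is still feasible, but at least one edge now carries zero flow and can be deleted --- contradicting minimality. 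This single cycle-cancellation step works uniformly for all orientations of~$C$ and replaces your entire path-rerouting and Hall programme; there is no need to track individual $s$--$f(s)$ paths, distinguish directed cycles from cycles with sinks, or reason about bipartite matchings at all.
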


\begin{proof}
    Let~$I=(G=(V,E),w,S,T,\ell)$ be a yes-instance of \dsl.
    We represent a solution via a flow formulation.
    Note that \dsl{} is equivalent to asking whether there is a set of edges of total cost at most~$\ell$ such that in the induced graph (with arbitrarily large capacities), there is a feasible flow with one unit of supply in each vertex for each time it appears in~$S$ and one unit of demand in each vertex for each time it appears in~$T$.
    Let~$K$ be a minimal set of edges of total cost at most~$\ell$ such that there exists a feasible flow from~$S$ to~$T$ in~$G[K]$ and let~$C \subseteq K$ be a set of edges in~$K$ whose underlying undirected graph is a cycle.
    Let~$\{v_0,v_1,\ldots,v_{k-1}\}$ be the vertices in~$C$ such that exactly one of~$(v_{i-1},v_{i\bmod k})$ and~$(v_{i\bmod k}, v_{i-1})$ is contained in~$C$ for each~$i \in [k]$.
    Note that if an edge is included in both directions, then reducing the flow along both edges by the same value yields another feasible flow solution.
    Hence, we can reduce the flow along at least one of the two edges to zero and get a solution in which one of the two edges is not used.
    We will next generalize this idea to arbitrary cycles.
    
    Let~$d$ be the minimum flow send over any edge in~$C$ and let~$e_d \in C$ be an edge over which exactly~$d$ flow was sent.
    We assume without loss of generality that~$e_d = (v_{i-1},v_{i \bmod k})$ for some~$i \in [k]$.
    We say that for $j\in[k]$, edges of the form~$(v_{j-1},v_{j \bmod k})$ in~$C$ are \emph{forward} edges and edges of the form~$(v_{j \bmod k},v_{j - 1})$ are \emph{backward} edges.
    We now reduce the flow on each forward edge by~$d$ and increase the flow of each backward edge by~$d$.
    Each vertex incident to two backward edges or two forward edges in~$C$ now has precisely~$d$ less incoming and outgoing flow within~$C$.
    For all other vertices, the total incoming flow and the total outgoing flow remains the same. Thus, for every vertex $v_i$ for $i\in[k]$, the difference between the incoming and outgoing flow remains the same. 
    This means that the change yields another feasible flow solution.  	
    We can now remove all edges with zero flow from the solution.
    As we remove at least the edge~$e_d$, we found a subset of~$K$ which is a solution.
    Since this contradicts the minimality of~$K$, this concludes the proof.
\end{proof}

We next show how to use the previous lemma to show that \dsl{} parameterized by the number of terminals is in FPT.

\begin{theorem}    
    \label{lem:dslfpt}
    \dsl{} can be solved in~$2^{\Oh(k \log k)} n^3$ time, where $k$ is the number of terminals.
\end{theorem}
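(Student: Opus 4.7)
The plan is to exploit the forest structure of \cref{lem:dslforest} via a Dreyfus--Wagner-style subset dynamic programming, adapted to account for edge directions through a flow interpretation.

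I would first precompute all-pairs shortest directed-path distances $\dist(u,v)$ in $\Oh(n^3)$ time. Then, for each sub-multiset $X$ of $S \cup T$ and each vertex $v \in V$, I would compute a table entry $D[X,v]$: the minimum cost of an edge set whose underlying undirected graph is a tree $T$ containing $v$, with terminals exactly $X$ (counted with multiplicity), such that the edge orientations in $T$ support routing the sources of $X$ to the sinks of $X$ internally, with virtual excess $\delta(X) := |X \cap S| - |X \cap T|$ leaving through the designated root $v$. The recurrence has two cases. The \emph{extend} case, applicable only when $\delta(X) \neq 0$, moves the root from $u$ to $v$ along a shortest directed path whose orientation is forced by the sign of the virtual flow it carries: $D[X,v] = \min_u D[X,u] + \dist(u,v)$ when $\delta(X) > 0$, and $D[X,v] = \min_u D[X,u] + \dist(v,u)$ when $\delta(X) < 0$. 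The \emph{merge} case combines two subtrees sharing the root $v$, giving $D[X,v] = \min_{X_1 \uplus X_2 = X,\, X_1, X_2 \neq \emptyset} D[X_1,v] + D[X_2,v]$.

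With $D$ in hand, I would set $C[Y] = \min_v D[Y,v]$ for every balanced $Y \subseteq S \cup T$ (that is, with $\delta(Y) = 0$), since such a $Y$ corresponds to one entire tree of the forest that routes internally with no leftover flow. A final partition DP, $P[Z] = \min_{\emptyset \neq Y \subseteq Z,\, \delta(Y)=0} P[Z \setminus Y] + C[Y]$, assembles the forest, and the answer is $P[S \cup T]$. Correctness follows from \cref{lem:dslforest}: any yes-instance admits a solution whose underlying undirected graph is a forest, each of whose trees must be balanced (the bijection matches sources to sinks within a component), and within a single tree the orientation of every edge is forced by the unique net flow it carries, which is exactly what the DP tracks via the virtual root excess.

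The main obstacle is arguing that the directed Dreyfus--Wagner recurrence is faithful: one must show that any optimal tree, rooted at any vertex, admits a decomposition into extend and merge steps whose cumulative cost is at most the tree's cost, and that the orientations induced by the rooting at every intermediate stage are compatible with the sign of the virtual flow at the current root. Given the standard decomposition argument for Steiner Tree (peel a leaf-to-branching chain or split at a branching point), this goes through once one checks that a chain between two branching points is always directed consistently with the unique flow that must traverse it. For the running time, APSP costs $\Oh(n^3)$, the table $D$ is filled in $\Oh(4^k n^2 + 9^k n)$ time (extend contributes $\Oh(4^k n^2)$ and merge contributes $\Oh(3^{2k} n)$ via the sum-over-subsets identity), and the partition DP adds $\Oh(9^k)$; the total comfortably fits within $2^{\Oh(k \log k)} n^3$ as claimed.
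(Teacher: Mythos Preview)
Your approach is correct and genuinely different from the paper's. The paper first argues (via \cref{lem:dslforest}) that an optimal solution in the metric closure is a forest with at most $2k-2$ internal Steiner vertices, then \emph{enumerates} all labelled forest shapes on at most $4k-2$ vertices (this is where the $2^{\Oh(k\log k)}$ comes from), guesses an orientation of each edge, checks flow feasibility, and finally uses color coding to embed the guessed oriented forest into the metric closure at minimum cost. You instead run a directed Dreyfus--Wagner DP, with the sign of the virtual excess at the root dictating the direction of the extend step, and then stitch balanced pieces together with a subset partition DP. Both routes rest on \cref{lem:dslforest}, but yours avoids the forest enumeration and the color coding entirely; in fact your own accounting ($4^k n^2 + 9^k n + n^3$) is $2^{\Oh(k)} n^3$, strictly better than the $2^{\Oh(k\log k)} n^3$ the paper obtains, so you are underselling your bound.

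A few points you glossed over do matter and should be stated explicitly. You never give the base case ($D[\{x\},x]=0$), and you should note that when the root $v$ is itself a terminal the decomposition peels it off via the merge $D[X,v] \le D[\{v\},v] + D[X\setminus\{v\},v]$ rather than via extend. More importantly, the recurrence does not literally maintain trees (extend and merge can create overlaps and cycles), so correctness needs the sandwich argument made precise: every extend/merge step yields an edge multiset that still supports the required flow with excess $\delta(X)$ at $v$, hence the DP value is at least the optimum; conversely, any optimal \emph{tree}---in which, crucially, every edge carries nonzero flow---admits a full extend/merge decomposition with consistently oriented chains, so the DP value is at most the optimal tree cost, which by \cref{lem:dslforest} equals the optimum. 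Finally, within a fixed $X$ the extend recurrence is self-referential; you should remark that a single pass after the merge layer suffices by the triangle inequality for $\dist$.
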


\begin{proof}
    Let~$I=(G=(V,E),w,S,T,\ell)$ be an instance of \dsl{} and let~${k = |S|}$.
    We first solve all-pairs shortest paths (APSP) in~$G$ in~$\Oh(n^3)$ time using the standard Floyd–Warshall algorithm.
    Then we construct the auxiliary directed weighted graph $H$ with the same set of vertices as $G$. For every pair $(u,v)$ of distinct vertices~${u,v\in V}$, we construct an edge $(u,v)$,
    and set its weight $w'(u,v)=\dist(u,v)$, where~$\dist(u,v)$ is the distance in $G$ between $u$ and $v$, that is, the minimum cost of a directed $u$-$v$-path. We set $w'(u,v)=+\infty$ if such a path does not exist. We show the following claim. 
    
\begin{claim}\label{cl:H}    
 The instances $I=(G,w,S,T,\ell)$ and  $I'=(H,w',S,T,\ell)$ are equivalent. Furthermore, if $(H,w',S,T,\ell)$ is a yes-instance then there is a solution $K'$ such that the underlying graph of $H[K']$ is a forest with at most $2k-2$ internal vertices.   
\end{claim}     
    
 \begin{proof}[Proof of Claim]
 Suppose that $I$ is a yes-instance. By  \cref{lem:dslforest}, we can assume that it has a solution $K$ whose underlying undirected graph is a forest. We assume that $K$ is inclusion minimal. Then $K$ induces a forest $F$ in the underlying graph whose leaves are in~$S\cup T$ (note that some terminals may be internal vertices). Since $|S|+|T|=2k$,
$F$ has at most $2k-2$ internal vertices of degree at least $3$. Let~$X$ be the set of these vertices, and let~$W=X\cup S\cup T$, and let~$\mathcal{P}$ to be the set of paths in $F$ with their end-points in $W$. Notice that each edge of~$K$ is in one of the paths of $\mathcal{P}$.  We  construct a subset $K'$ of the set of edges of $H$ with endpoints in $W$ as follow.  For every path $P=(v_1,\ldots,v_p)\in \mathcal{P}$, we observe that either $P$ or its reversal $(v_p,\ldots,v_1)$ is a directed path in $G$ because the internal vertices are not terminals. We assume without loss of generality that $P$ is a directed path. Then, we add $(v_1,v_p)$ to $K'$. Notice that $w'(v_1,v_p)=\dist(v_1,v_p)$ is at most the cost of~$P$. This immediately implies that the total cost of $K'$ is at most $\ell$. Also, because $K$ is a solution to~$I$, we have that $K'$ is a solution to $I'$, and as $F$ is a forest, we have that underlying graph of $H[K']$ is a forest with at most~$2k-2$ internal vertices. 
 
Assume now that $I'$ is a yes-instance. Then $I'$ has a solution $K'$ of cost at most~$\ell$. For each $(u,v)\in K'$, $G$ contains a $u$-$v$-path $P_{uv}$ of total cost at most~${w'(u,v)=\dist(u,v)}$. Consider the union $K$ of the sets of edges of the paths $P_{uv}$ taken over all edges of $K'$. By construction, the cost of $K$ is at most~$\ell$. Furthermore, because we replace edges by directed paths, $K$ is a solution to~$I$. This proves the claim.
  \end{proof}   
  
 We use this claim and solve  \dsl{} for $I'$. We are looking for a solution whose underlying undirected graph is a forest with at most~$2k-2$ internal vertices.  
  We guess the structure of that forest, that is, we guess a forest $F$ with at most $4k-2$ vertices whose leaves are in $S\cup T$ together with a mapping of its vertices to the terminals. As shown by Tak{\'a}cs~\cite{Tak90}, the number of (undirected) forests with at most~$4k$ labeled vertices can be upper-bounded by~$(4k)^{4k}$.  Thus, we have at most $(4k)^{4k}$ guesses. Next, we turn $F$ to a directed graph $\overrightarrow{F}$ by guessing the orientation of each edge of $F$ in~$\Oh(2^{4k})$ time, and then verify whether this orientation allows for a feasible~$S$-$T$-flow in~$(4k)^{1+o(1)}$ time~\cite{B+23}. Finally, we find a subgraph of $H$ isomorphic to $\overrightarrow{F}$ respecting the mapping of terminals of minimum total cost. Because $\overrightarrow{F}$ has at most $4k-2$ vertices, this can be done by the standard color coding algorithm of Alon, Yuster, and Zwick~\cite{AlonYZ95} in $2^{\Oh(k)}n^2\log n$ time. If the cost of the solution is at most $\ell$, we return the solution. Otherwise, we discard the current choice of~$\overrightarrow{F}$.
  If we fail to find a solution for all possible choices of~$\overrightarrow{F}$, we report that $I'$ is a no-instance. 
  
  Since we have at most $(4k)^{4k}\cdot 2^{4k} \in 2^{\Oh(k \log k)}$ choices for~$\overrightarrow{F}$, the overall running time is in~$2^{\Oh(k \log k)} n^3$. This concludes the proof.
\end{proof}

With \cref{lem:dslfpt} at hand, 
we can show that \ftstcshort{} is FPT parameterized by~$\UUU$ and~XP when parameterized by~$q$.

\begin{theorem}
   \ftstcshort{} can be solved in~$2^{\Oh(\UUU \log(\UUU))} n^3$~time and in~$2^{\Oh(q \log(q))} m^{q}n^3$ time.
\end{theorem}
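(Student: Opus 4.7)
The plan is to reduce \ftstcshort{} to \dsl{} on the safe-edge subgraph $G_S=(V,S,w\vert_S)$ and invoke \cref{lem:dslfpt}. By the proof of \cref{obs:parameters} we may assume $q>k$, otherwise a shortest safe-only $s$-$t$-path already solves the instance. Combined with \cref{lem:sol} and the fact that any all-safe path among $P_1,\ldots,P_{k+1}$ would force $q=0$, this yields a witness in which each path uses at least one vulnerable edge. We guess a \emph{configuration}: a subset $F\subseteq U$ together with an ordered partition $F_1,\ldots,F_{k+1}$, where $F_i=((u_1^i,v_1^i),\ldots,(u_{r_i}^i,v_{r_i}^i))$ lists the vulnerable edges of $P_i$ in the order in which they appear along $P_i$. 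From such a configuration we build a \dsl{} instance on $G_S$ whose source multiset contains $s,v_1^i,\ldots,v_{r_i}^i$ for each $i\in[k+1]$ and whose target multiset contains $u_1^i,\ldots,u_{r_i}^i,t$, so both multisets have size $|F|+k+1\leq 2|F|$. We call the algorithm of \cref{lem:dslfpt} with budget $\ell-w(F)$ and accept iff some configuration succeeds.

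For correctness, the forward direction is immediate: the safe edges of any FTP solution satisfy the corresponding DSL instance under the ``intended'' bijection within budget $\ell-w(F)$. For the reverse direction, suppose \dsl{} returns an edge set $K\subseteq S$ of cost $c\leq\ell-w(F)$ admitting paths under \emph{some} bijection $f'$. Route one unit of flow along each source-to-$f'$-target path in $G_S[K]$ and one unit along each vulnerable edge in $F$. A per-path accounting shows that the combined contribution of path $i$ to the net outflow at every vertex $v$ equals $[v=s]-[v=t]$: at each endpoint of a vulnerable edge in $F_i$, the DSL source/target contribution cancels with the flow along the vulnerable edge itself. Summing over the $k+1$ paths yields a $(k+1)$-flow from $s$ to $t$ in $G[K\cup F]$ that respects the FTP capacities (safe edges uncapacitated, vulnerable edges at capacity one), so by the flow formulation of \cref{lem:sol} the set $K\cup F$ is an FTP solution of cost $c+w(F)\leq\ell$. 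The subtle point — and the main technical obstacle — is that $f'$ need not coincide with the intended bijection; the flow-cancellation argument works regardless.

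For the running-time analysis, each DSL call uses at most $2|F|$ terminals and runs in $2^{\mathcal{O}(|F|\log|F|)}n^3$ time by \cref{lem:dslfpt}. The number of configurations with $|F|=f$ is $\binom{|U|}{f}\cdot f!\cdot \binom{f+k}{k}$ (choose $F$, arrange it as a sequence, then split into $k+1$ contiguous prefixes). For the $|U|$-parameterization, summing over $f\leq |U|$ and using $k<|U|$ yields $2^{\mathcal{O}(|U|\log|U|)}$ configurations and total time $2^{\mathcal{O}(|U|\log|U|)}n^3$. For the $q$-parameterization, we restrict to $f\leq q$; the subset enumeration contributes $\binom{|U|}{f}\leq m^q$ while the remaining factors are $2^{\mathcal{O}(q\log q)}$, giving the claimed $2^{\mathcal{O}(q\log q)}m^q n^3$ bound.
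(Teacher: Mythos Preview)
Your approach is essentially the paper's: guess the vulnerable edges $F$ of a solution, build the \dsl{} instance on $G_S$ whose sources are $\{s\}^{k+1}$ plus the heads of $F$ and whose targets are the tails of $F$ plus $\{t\}^{k+1}$, and invoke \cref{lem:dslfpt}. The only structural difference is that you additionally enumerate a \emph{configuration} (an ordered partition of $F$ into $k+1$ sequences), but notice that your \dsl{} instance depends only on the \emph{set} $F$, not on the configuration; so you may drop the configuration entirely and enumerate only subsets (in $\Oh(2^{|U|})$ or $\Oh(m^q)$ time), which is exactly what the paper does. Your running-time bounds still hold, but the extra enumeration is pure overhead.

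One small wrinkle in your reverse direction: the ``per-path accounting'' you describe does not quite make sense under an arbitrary bijection $f'$, since $f'$ may send the head of a vulnerable edge in $F_i$ to the tail of an edge in $F_j$ with $j\neq i$, so there is no well-defined ``path $i$'' in the combined flow. What \emph{does} work, and what the paper uses, is the global accounting: summing over all \dsl{} paths gives net supply $(\#\text{heads at }v)-(\#\text{tails at }v)+(k{+}1)([v{=}s]-[v{=}t])$ at every vertex $v$, and the unit flows on the guessed vulnerable edges contribute exactly $(\#\text{tails at }v)-(\#\text{heads at }v)$, leaving a $(k{+}1)$-flow from $s$ to $t$. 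Your conclusion is correct; only the phrasing should be globalized.
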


\begin{proof}
    We first guess the vulnerable edges of an optimal solution.
    This can be done in~$\Oh(2^{\UUU})$ and in~$\Oh(m^q)$~time.
    Next, we reduce the remaining problem to \dsl.
    To this end, we delete all vulnerable edges from the input graph, define~$S$ to be the multiset of the heads of all guessed vulnerable edges and~$k+1$~times the vertex~$s$, and set~$T$ to be the multiset of all tails of guessed vulnerable edges and~$k+1$~times the vertex~$t$.
    By~\cref{obs:parameters}, we may assume~${|S|=|T| \leq q+k+1 \leq 2q \leq 2\UUU}$.
    Next, we use \cref{lem:dslfpt} to solve the constructed instance in~$2^{\Oh(q \log(q))} n^3$~time.
    Note that a solution for \dsl{} (if such a solution exists) together with the guessed set of edges corresponds to a solution for \ftstc.
    This is true since each guessed vulnerable edge in the solution takes one unit of demand out at its tail and adds one unit of supply back at its head.
    Note that this is equivalent to sending one unit of supply through the edge.
    Hence, any solution to \dsl{} together with the guessed vulnerable edges induces a graph in which~$k+1$ units can flow from~$s$ to~$t$ (and there are potentially circles in the graph that circulate some amount of flow).
    Moreover, if the set of vulnerable edges is guessed correctly, then an optimal solution for \dsl{} together with the guessed set of edges also corresponds to an optimal solution for \ftstcshort{} (one in which no flow is circulated).
    To verify this, we show that any solution to \ftstcshort{} without the vulnerable edges is also a solution to \dsl.
    To this end, consider the set of vulnerable edges in an optimal solution.
    \cref{lem:sol} says that the solution can be partitioned into~$k+1$ paths from~$s$ to~$t$ that only intersect in safe edges.
    For each of these paths, consider the set of vulnerable edges in that path in the order they appear along the path.
    Between any two consecutive such edges, the safe edges in the solution connect the head of the first edge to the tail of the second (where we view~$s$ as the head of the first edge and~$t$ as the tail of the last edge).
    This gives both the bijection between the tails and heads of all vulnerable edges in the solution and the solution to connect the respective pairs.
    Thus, it is a solution to \dsl.
    This concludes the proof.
\end{proof}

As we will show later, \ftstcshort{} is W[1]-hard when parameterized by~$\ell$.
As~$\ell$ upper-bounds~$q$, this also shows that we cannot hope to improve the algorithm for~$q$ to an FPT algorithm.
We conclude this section by showing an algorithm that shows fixed-parameter tractability for~$\SSS$ and containment in XP for~$p$.

\begin{theorem}
    \ftstcshort{} can be solved in~$2^{\SSS} \cdot m^{1+o(1)}$~time and in~$m^{p+1+o(1)}$ time.
\end{theorem}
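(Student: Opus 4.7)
The plan is to reduce \ftstcshort{} to many instances of \textsc{Minimum-Cost Flow}, one per possible choice of the set of safe edges used by a solution. By \cref{lem:sol}, a valid solution is a union of $k+1$ $s$-$t$-paths that pairwise intersect only in safe edges, so once the safe-edge set $S' \subseteq S$ is fixed, choosing the vulnerable part amounts to routing a $(k+1)$-flow in which every vulnerable edge has capacity one.

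Concretely, for each guess $S' \subseteq S$, I would build an auxiliary flow network obtained from $G$ by deleting every safe edge not in $S'$, giving every edge of $S'$ capacity $k+1$ and cost $0$, and giving every vulnerable edge $e \in U$ capacity $1$ and cost $w(e)$. Let $c(S')$ denote the cost of a minimum-cost $(k+1)$-flow from $s$ to $t$ in this network, or $+\infty$ if no such flow exists; this can be computed in $m^{1+o(1)}$ time using the almost-linear-time min-cost flow algorithm of Chen, Kyng, Liu, Peng, Probst Gutenberg, and Sachdeva. The algorithm accepts if and only if some guess $S'$ satisfies $w(S') + c(S') \leq \ell$.

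For correctness, in one direction any integral $(k+1)$-flow decomposes into $k+1$ $s$-$t$-paths whose pairwise common edges all lie in $S'$ (since vulnerable edges have capacity one), and by \cref{lem:sol} these paths together with $S'$ certify a valid \ftstcshort{} solution of the required cost. Conversely, for an optimal solution with safe-edge set $S^\star$, taking $S' = S^\star$ makes the $k+1$ witnessing paths a feasible $(k+1)$-flow whose cost equals the total weight of vulnerable edges used; thus $w(S') + c(S')$ equals the optimum cost.

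The two running times then follow from the guessing step. Enumerating all $2^{|S|}$ subsets yields the $2^{|S|} \cdot m^{1+o(1)}$ bound. For the parameter $p$, iterate $p' = 0, 1, 2, \ldots$, at each step trying all $\binom{|S|}{p'}$ subsets of that size and stopping as soon as a feasible solution is found; the successful $p'$ is at most $p$, and $\sum_{p'=0}^{p} \binom{|S|}{p'} = O(m^p)$ by a geometric-series argument, giving the claimed $m^{p+1+o(1)}$ bound. The main subtlety is that min-cost flow natively charges per unit of flow on every used edge, which would double-count shared safe edges in our setting; this is precisely what forces the enumeration over $S'$ combined with the trick of pre-paying for its edges at zero flow cost, and thus also explains why the running time carries an extra factor depending on $|S|$ or $p$ rather than being purely polynomial.
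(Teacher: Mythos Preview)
Your proposal is correct and follows essentially the same approach as the paper: guess the set of safe edges, give guessed safe edges capacity $k+1$ and cost $0$, give vulnerable edges capacity $1$ and their original cost, and decide via a single min-cost $(k+1)$-flow computation per guess using the $m^{1+o(1)}$-time algorithm. Your write-up is in fact slightly more careful than the paper's, since you make explicit the flow-decomposition argument for the forward direction and the incremental search over $p'=0,1,\ldots$ needed to achieve the $m^{p+1+o(1)}$ bound without knowing $p$ in advance.
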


\begin{proof}
    First, we guess the set of safe edges in the solution.
    This can be done in~$\Oh(2^{\SSS})$ and in~$\Oh(m^p)$ time.
    Let~$\ell_1$ be the sum of edge weights of all guessed safe edges.
    Next, we iterate over all edges and assign capacities and costs as follows.
    For safe edges that are guessed to be in the solution, we assign a capacity of~$k+1$ and a cost of~$0$.
    For safe edges that are not guessed to be in the solution, we assign a capacity of~$0$ (which corresponds to removing the edge and hence the cost does not matter).
    For vulnerable edges, we assign a capacity of~$1$ and a cost equal to the weight of the edge.
    Note that the assignment of capacities and costs takes~$\Oh(m)$ time.
    Finally, we compute a minimum-cost $k+1$-flow from~$s$ to~$t$ in the input graph with assigned capacities and costs.
    Note that such a flow corresponds to a solution of \ftstcshort{} by \cref{lem:sol} where we only pay for the vulnerable edges (as all safe edges have a cost of~$0$).
    Hence, we return a solution only if the minimum cost of a~$k+1$ flow plus~$\ell_1$ does not exceed~$\ell$.
    As shown in \cite{B+23}, a minimum-cost flow can be computed in~$m^{1+o(1)}$~time.
    This concludes the proof.
\end{proof}

\section{Hardness Results}
In this section, we complement the results from the previous section.
We start with a 
reduction from \textsc{Biclique} that shows that \ftstcshort{} is W[1]-hard for the parameter~$\ell$ even if~${\BUB = 1}$.
Hence, this also shows para-NP-hardness for~$b$.
Note that the problem is polynomial-time solvable for~$b=0$ and hence our result is tight with respect to~$b$.
All results in this section are for undirected and unweighted graphs, but they also generalize to directed graphs.
\begin{theorem}
\label{prop:W1}
    \ftstcshort{} is W[1]-hard when parameterized by~$\ell$ even when~$\BUB = 1$ and the input graph is undirected and unweighted.
    Moreover, assuming the ETH, the problem cannot be solved in~$f(\ell) \cdot n^{g(\BUB) \cdot o(\sqrt[4]{\ell})}$~time for any computable functions~$f$ and~$g$.
\end{theorem}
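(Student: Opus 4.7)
The plan is to reduce from \textsc{Biclique}, which is NP-hard and W[1]-hard parameterized by the biclique size~$d$, and, by known results, cannot be solved in $f(d) \cdot n^{o(\sqrt{d})}$ time under the ETH. Given an instance $(H=(A\cup B, E_H),d)$, I would construct in polynomial time an unweighted undirected instance $I$ of \ftstcshort{} with $\BUB=1$ and $\ell=\Theta(d^{2})$ such that $I$ is a yes-instance if and only if $H$ contains $K_{d,d}$. This yields W[1]-hardness for $\ell$ with $\BUB=1$ fixed; and since $\sqrt[4]{\ell}=\Theta(\sqrt{d})$, combining with an $f(\ell)\cdot n^{g(\BUB)\cdot o(\sqrt[4]{\ell})}$-time algorithm for \ftstcshort{} would give (for $\BUB=1$) an $f(d^{2})\cdot n^{g(1)\cdot o(\sqrt{d})}$-time algorithm for \textsc{Biclique}, contradicting the ETH.

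The construction requires $k+1$ paths from~$s$ to~$t$ that are edge-disjoint on vulnerable edges. The key ingredients are: two \emph{selection gadgets}, one between~$s$ and a central hub region, the other between the hub and~$t$, jointly forcing any $(k+1)$-tuple of paths to encode a choice of~$d$ vertices in~$A$ and~$d$ vertices in~$B$; a single distinguished safe edge $e^{*}$ in the hub; and vulnerable edges laid out so that two of the paths can traverse $e^{*}$ (combined with the corresponding entries and exits through the selection gadgets) simultaneously precisely when the selected $A$- and $B$-vertices form a biclique in~$H$. Since all weights are~$1$, a single shared safe edge saves exactly one unit of cost relative to the optimum $(k+1)$-flow~$C$, so $\BUB=1$ exactly captures this sharing.

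Correctness is argued in two directions. For the forward direction, a biclique $A'\cup B'$ in~$H$ yields an explicit routing of the $k+1$ paths through the gadgets, with exactly two of them meeting at~$e^{*}$, witnessing an \ftstcshort{}-solution of cost $C-1$. For the reverse direction, any \ftstcshort{}-solution of cost at most $C-1$ must save at least one unit by sharing a safe edge; the gadgets are designed so that the only safe edge available for sharing is $e^{*}$, and inspection of the vulnerable edges the two sharing paths must use on either side of~$e^{*}$ shows that the $2d$ vertex choices extracted from the paths necessarily form a biclique in~$H$.

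The main obstacle is designing the selection gadgets so that the saving of exactly one unit is both achievable and tightly linked to the biclique condition: the gadgets must prevent any alternative routing from yielding additional savings (which would make it impossible to enforce $\BUB=1$ while still encoding a nontrivial structure) while still permitting the biclique-induced routing to realize a saving of one. Achieving this balance in the unit-weight setting -- where every saving arises from a single safe edge shared by exactly two paths -- together with keeping $\ell\in\Theta(d^{2})$ so that the ETH bound composes through to $\sqrt[4]{\ell}$, is the technical heart of the proof.
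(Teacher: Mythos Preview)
Your high-level framing is right and matches the paper: reduce from \textsc{Biclique}, arrange $\ell=\Theta(d^2)$, and force $\BUB=1$, so that the ETH bound for \textsc{Biclique} transfers to the claimed $n^{o(\sqrt[4]{\ell})}$ bound. But the concrete mechanism you sketch---a single distinguished safe edge $e^*$ shared by exactly two of the $k+1$ paths, with ``selection gadgets'' forcing the paths to encode $d$ vertices on each side---is not the paper's construction, and as stated it is not a proof: you explicitly flag the design of these gadgets as the ``technical heart'' and leave it open. More substantively, it is hard to see how the sharing of one safe edge by \emph{two} paths can enforce a condition involving $d^2$ adjacencies; nothing in your sketch explains how the two sharing paths witness all $d^2$ edges of a $K_{d,d}$, and without that the reverse direction of the correctness argument has no content.

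The paper's construction is quite different in spirit and avoids this difficulty. It sets $k=d^2-1$ (not $\Theta(d)$), so that a solution must supply $d^2$ $s$-$t$-paths disjoint on vulnerable edges; the only vulnerable edges available (after a simple exchange argument) are the edges of $H$ itself, so these $d^2$ paths each use a distinct edge of $H$, and a counting argument forces them to span a $K_{d,d}$. The safe part of the solution is a long $s$-$v$ path of about $2d^2$ safe edges plus one safe edge from $v$ to each $A$-vertex and from each $B$-vertex to $t$; this is where the sharing happens, and it is massive (all $d^2$ paths share the long safe path), not a single edge. The bound $\BUB=1$ is obtained \emph{separately} by adding $d^2$ auxiliary $s$-$t$ paths of three vulnerable edges each, which certify $C\le 3d^2$ while $\ell=3d^2-1$; these auxiliary paths play no role in the intended solution. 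So $\BUB=1$ is engineered by padding the instance with a cheap flow witness, not by making the saving itself equal to one. If you want to salvage your approach you would need to actually exhibit gadgets with the property you describe; otherwise, the paper's route---set $k+1=d^2$, use the biclique edges as the vulnerable edges, and pad to force $\BUB=1$---is both simpler and complete.
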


\begin{proof}
    We reduce from \textsc{Biclique}.
    This problem is known to be W[1]-hard when parameterized by~$d$ and cannot be solved in~$f(d) \cdot n^{o(\sqrt{d})}$ time for any computable function~$f$ assuming the ETH \cite{Lin18}.
    Let~${(G = (A \cup B,E),d)}$ be an input instance. We assume without loss of generality that $|E|\geq 3d^2$.

    To produce an equivalent instance of \ftstc, we start with a copy of~$G$, where all edges are vulnerable.
    Next, we add three vertices~$s,t,$ and~$v$ and add 
    safe edges between~$v$ and each vertex in~$A$ and between~$t$ and each vertex in~$B$.
    Moreover, we add a path $P$ of~$2d^2-2d-1$ safe edges between~$s$ and~$v$ and~$d^2$ paths between~$s$ and~$t$ of three vulnerable edges each.
    Let the set of edges in these latter paths be~$E_P$.
    Finally, we set~$k=d^2-1$ and~$\ell = 3d^2-1$.
    The reduction is depicted in \cref{fig:param_l}.
    \begin{figure}[t]
        \centering
        \includegraphics[scale=0.8]{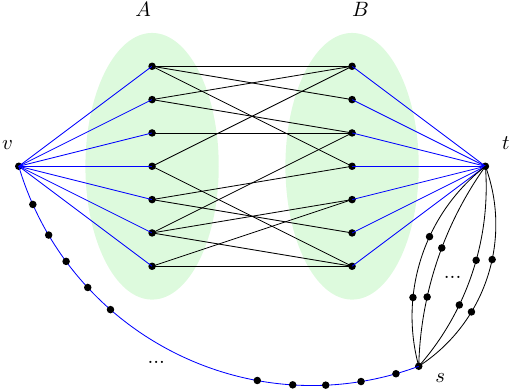}
        \caption{The graph constructed in the proof of \cref{prop:W1}. Safe edges are in blue and vulnerable edges in black.}
        \label{fig:param_l}
        \vskip-3mm
    \end{figure}

    We next show that the two instances are equivalent.
    To this end, assume first that~$G$ contains a~$K_{d,d}$ as a subgraph.
    We build a solution for \ftstcshort{} as follows.
    We pick all edges in the~$s$-$v$-path, all edges in the supposed~$K_{d,d}$, and all safe edges incident to the~$2d$ endpoints.
    Note that this graph contains exactly~${2d^2-2d-1 + d^2 + 2d = \ell}$ edges.
    Moreover, after removing any~${k=d^2-1}$ vulnerable edges, there still remains one edge~$\{u,w\}$ with~$u \in A$ and~$w \in B$ in the solution graph.
    By construction, the~$s$-$v$-path and the edges~$\{v,u\}$ and~$\{w,t\}$ are safe and contained in the solution.
    Thus, there remains an~$s$-$t$-path after removing any~$k$ vulnerable edges.

    For the reverse direction, assume that the constructed instance of \ftstcshort{} is a yes-instance.
    Then, there exists a subgraph consisting of at most~$\ell = 3d^2-1$ edges such that after removing any~$k=d^2-1$ vulnerable edges, there still remains an~$s$-$t$-path.
    Note that this solution contains the entire~$s$-$v$-path $P$  as, otherwise, it has to contain all edges in~$E_P$ and therefore has size at least~$\ell+1$.
    Note that any~$s$-$t$-path in the solution either contains at least one vertex of~$A$ and at least one vertex of~$B$ or is a path with three edges from $E_P$.
    In the second case, we can replace the edges of such a path in the solution by the edges of a path $Q= (v,x,y,t)$, where $x\in A$, $y\in B$, and~$\{x,y\}$ is an edge of $G$ that is not in the solution; such an edge exists because~$|E|\geq 3d^2$. Since~$Q$ contains a single vulnerable edge and the edges of~$P$ are in the original solution, the obtained set of edges forms a solution. Thus, we can assume that any $s$-$t$-path in a solution  contains a vertex from~$A$ and a vertex from~$B$.  
    Denote by $K$ a solution of this type.
    Notice that it should contain at least $k+1=d^2$ vulnerable edges between~$A$ and~$B$. Also, because $\ell=3d^2-1$ and $P$ contains~$2d^2-2d-1$ edges, the total number of edges in $K$ incident to the vertices $A\cup B$ is at most $d^2+2d$.   
     
 Denote by $X\subseteq A$ the set of vertices $x\in A$ such that $\{v,x\}\in K$, and let~$Y\subseteq B$ be the set of vertices $y\in B$ such that $\{y,t\}\in K$. 
 Let also $X'\subseteq A\setminus X$ and~$Y'\subseteq B\setminus Y$ denote the subsets of vertices of $A$ and $B$, respectively, incident to some edges in the solution but not adjacent to $v$ or $t$.   
 Notice that $|X|+|Y|\leq 2d$ because, otherwise, $K$ would contain at most $2d^2-1$ vulnerable edges.  
See \cref{fig:image_sketch} for an illustration.
 \begin{figure}[t]
    \centering
    \includegraphics[width=0.4\linewidth]{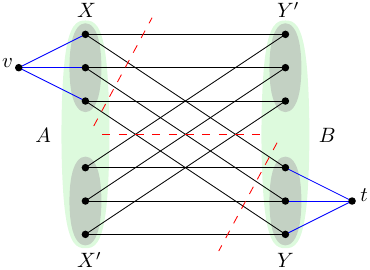}
    \caption{The sets $X,Y,X',$ and $Y'$ constructed in \cref{prop:W1}. The three dashed lines corresponds to $S_{XY}\cup S_{XY'},$ $S_{XY}\cup S_{X'Y},$ and $S_{XY}\cup S_{X'Y'}$.}
    \label{fig:image_sketch}
\end{figure}
Denote by $S_{XY}$ the set of vulnerable edges in $K$ between $X$ and $Y$. Furthermore, 
 denote by $S_{XY'}$, $S_{X'Y}$, and $S_{X'Y'}$ the sets of edges in $K$ between $X$ and $Y'$, $X'$ and $Y$, and $X'$ and $Y'$, respectively. 
 By this definition, $S_{XY}$, $S_{XY'}$, $S_{X'Y}$, and $S_{X'Y'}$ are disjoint, and their union is the set of vulnerable edges in $K$.
 Because $K$ is a solution, we have that 
$|S_{XY}|+|S_{XY'}|\geq d^2$, $|S_{XY}|+|S_{X'Y}|\geq d^2$, and $|S_{XY}|+|S_{X'Y'}|\geq d^2$, as depicted in \cref{fig:image_sketch}. Thus, $K$ contains at least 
$3d^2-2|S_{XY}|$ vulnerable edges. Then 
\begin{equation}\label{eq:SXY}
3d^2-2|S_{XY}|\leq d^2+2d-|X|-|Y|
\end{equation}
 because the total number of edges in $K$ incident to the vertices $A\cup B$ is at most~$d^2+2d$ and $K$ contains $|X|+|Y|$ safe edges. 
Hence, 
\begin{equation*}
2d^2-2d\leq 2|S_{XY}|-|X|-|Y|\leq 2|X||Y|-|X|-|Y|\leq 2\Big(\frac{|X|+|Y|}{2}\Big)^2-2\frac{|X|+|Y|}{2}.
\end{equation*}
This implies that $|X|+|Y|\geq 2d$. We obtain that $|X|+|Y|=2d$ and, moreover, ${|X|=|Y|=d}$ because 
$2|X||Y|-|X|-|Y|=2|X||Y|-2d$ achieves its maximum value for~$|X|=|Y|=d$.  
Then, by Inequality~\ref{eq:SXY}, $|S_{XY}|\geq d^2$. As $|X|+|Y|=2d$, $K$ contains at most $d^2$ vulnerable edges. Thus,
$S_{XY'}=S_{X'Y}=S_{X'Y'}=\emptyset$ and $|S_{XY}|=d^2$. This proves that  
$G[X\cup Y]$ is a biclique $K_{d,d}$. 
This completes the proof of the equivalence of the instances.

    To conclude the proof, observe that the reduction can be computed in polynomial time and that~$\BUB \leq 3d^2 - \ell = 1$ as the~$3d^2$ edges in~$E_P$ ensure that the minimum-cost~$(k+1)$-flow has cost at most~$3d^2$. 
    Thus, any algorithm that solves \ftstcshort{} in~${f(\ell) \cdot n^{g(\BUB) \cdot o(\sqrt[4]{\ell})}}$~time for any computable functions~$f$ and~$g$ can be used to solve \textsc{Biclique} in~${f'(d) \cdot n^{o(\sqrt[4]{d^2})}}$~time for some computable function~$f'$.
    This refutes the ETH~\cite{Lin18}.
\end{proof}

We next exclude polynomial kernels for~$|S|$ and~$|U|$.
We again start with the parameterization by the number of vulnerable edges. The lower bound is obtained by reducing from \textsc{Steiner Tree} parameterized by the number of terminals.

\begin{proposition}
\label{prop:nopolyU}
\ftstcshort{} does not admit a polynomial kernel when parameterized by the number \UUU{} of vulnerable edges in the input graph even if the input graph is undirected and unweighted.
\end{proposition}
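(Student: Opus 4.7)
The plan is to give a polynomial parameter transformation from \textsc{Steiner Tree} parameterized by the number of terminals to \ftstcshort{} parameterized by $\UUU$. The source problem is known not to admit a polynomial kernel on unweighted undirected graphs (by the classical result of Dom, Lokshtanov, and Saurabh) unless $\mathrm{NP} \subseteq \mathrm{coNP}/\mathrm{poly}$, so such a transformation immediately yields the proposition.

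Given an unweighted \textsc{Steiner Tree} instance $(G, S, d)$ with $|S| = t$ terminals, I would pick an arbitrary terminal $s^* \in S$ as a root and construct an \ftstcshort{} instance as follows: keep $G$ with every edge declared safe, add a fresh sink vertex $t^*$, and for each of the $t-1$ non-root terminals $s_i$ add a single vulnerable edge $\{s_i, t^*\}$. The source is $s^*$, the sink is $t^*$, and I set $k = t-2$ and $\ell = d + (t-1)$. Then $\UUU = t-1$ is linear in the source parameter, and the reduction is clearly polynomial in the size of $(G,S,d)$, as required for a polynomial parameter transformation.

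The key design choice is that $k+1 = t-1$ equals the number of vulnerable edges, so by \cref{lem:sol} any \ftstcshort{}-solution must realize $t-1$ paths from $s^*$ to $t^*$ pairwise edge-disjoint on vulnerable edges. Since the only edges incident to $t^*$ are vulnerable, each such path must reach $t^*$ through exactly one of the vulnerable edges $\{s_i, t^*\}$, and the disjointness condition forces each of these $t-1$ edges to be used by exactly one path. Consequently, the safe part of the solution contains an $s^*$-$s_i$-path for every non-root terminal $s_i$, hence a connected subgraph spanning $S$, i.e., a Steiner tree; the cost bound $\ell = d + (t-1)$ then caps its size by $d$. The forward direction is straightforward: an optimal Steiner tree together with all $t-1$ vulnerable edges produces the required $t-1$ paths at total cost $d + (t-1) = \ell$.

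The main obstacle I expect is the reverse direction, namely ensuring the safe part of any feasible solution really spans all of $S$; this rests squarely on the observation that $t^*$ is reachable only through vulnerable edges combined with the pairwise vulnerable-edge disjointness guaranteed by \cref{lem:sol}, which together pin down all $t-1$ vulnerable edges and hence all $t-1$ terminals. The remaining verifications (polynomial running time of the reduction, linearity of $\UUU$ in $t$, and invoking the incompressibility of \textsc{Steiner Tree} through the polynomial parameter transformation) are routine.
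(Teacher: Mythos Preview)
Your proposal is correct and follows essentially the same approach as the paper's proof: a polynomial parameter transformation from \textsc{Steiner Tree} parameterized by the number of terminals, making all original edges safe and attaching a single new vertex to the terminals via vulnerable edges so that the safe part of any feasible solution is forced to connect all terminals. The only cosmetic difference is that the paper introduces the new vertex as the source $s$ with vulnerable edges to all $|T|$ terminals (renaming one terminal to $t$, giving $\UUU=|T|$ and $k=|T|-1$), whereas you introduce it as the sink $t^*$ with vulnerable edges to the $|T|-1$ non-root terminals (giving $\UUU=|T|-1$ and $k=|T|-2$); the equivalence arguments are otherwise identical.
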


\begin{proof}
     We present a polynomial parameter transformation from \textsc{Steiner Tree} parameterized by the number of terminals.
     This problem is known to not admit a polynomial kernel~\cite{bluebook}.
     To this end, let~$(G=(V,E),T \subseteq V,d)$ be an instance of \textsc{Steiner Tree}.
     We start with a copy of~$G$ where all edges are safe edges.
     Next, we add a vertex~$s$ and a vulnerable edge~$\{s,u\}$ for each terminal~$u \in T$.
     Finally, we arbitrarily pick one of the terminals in~$T$ and rename it to~$t$, set~$\ell = d+|T|$, and set~$k = |T|-1$.
     \Cref{fig:param_U} shows an example of the construction.
     \begin{figure}[t]
         \centering
         \includegraphics{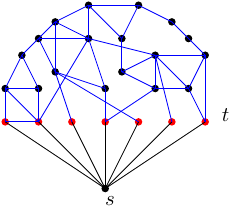}
         \caption{An example of the construction behind the proof of \cref{prop:nopolyU}. The blue (safe) edges are a copy of the input graph~$G$ and the terminals are represented in red. Black edges are vulnerable.}
         \label{fig:param_U}
     \end{figure}

     Since the reduction can clearly be computed in polynomial time and since the number of vulnerable edges is exactly~$|T|$, it only remains to show that the two instances are equivalent.
     To this end, first assume that there exists a Steiner Tree of size at most~$d$ in~$G$.
     Then, we pick a solution by picking the at most~$d$ safe edges in the assumed Steiner Tree plus all~$|T|$ vulnerable edges.
     Note that removing any~$k$ vulnerable edges still leaves a connection between~$s$ and least one terminal vertex.
     Since the~$d$ safe edges ensure that all terminals are connected, we can still find an~$s$-$t$-path in the solution after removing any~$k$ vulnerable edges.

     For the other direction, assume that there is a solution of size~$\ell = d+|T|$ for the constructed instance of \ftstc.
     Note that this solution needs to contain all~$|T|$ vulnerable edges as otherwise we can remove all remaining at most~$k$ vulnerable edges to completely separate~$s$ from the rest of the graph.
     Hence, the solution contains at most~$d$ of the safe edges.
     Now assume that these safe edges do not form a Steiner Tree for~$T$ in~$G$.
     Then, the set of safe edges induce at least two different connected components containing terminal vertices.
     Pick one arbitrary terminal~$w$ that is in a different connected component than~$t$.
     Then, it holds that if we remove all vulnerable edges except for~$\{s,w\}$ from the solution, then~$s$ is only directly connected to~$w$ and by assumption, there is no connection between~$w$ and~$t$.
     This is a contradiction to the fact that we assume that there is always a remaining~$s$-$t$-path after removing any~$k$ vulnerable edges from the solution.
     Thus, the assumption that the set of~$d$ safe edges does not form a Steiner Tree in~$G$ was wrong and this concludes the proof.
\end{proof}

Finally, we exclude polynomial kernels for the parameter~$\SSS$ via a reduction from \textsc{Hitting Set} parameterized by the size of the universe.

\begin{proposition}
\label{prop:nopolyS}
\ftstcshort{} does not admit a polynomial kernel when parameterized by the number \SSS{} of safe edges in the input graph even if the input graph is undirected and unweighted.
\end{proposition}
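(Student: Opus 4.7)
The plan is a polynomial parameter transformation from \textsc{Hitting Set} parameterized by the size $|\mathcal{U}|$ of the universe, which is known not to admit a polynomial kernel. Given an instance $(\mathcal{U}=\{u_1,\ldots,u_n\},\mathcal{F}=\{F_1,\ldots,F_m\},d)$, I will build an \ftstcshort{} instance whose safe-edge count $\SSS$ is exactly $n$, which then transfers the lower bound.

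The construction introduces vertices $s$, $t$, an ``element vertex'' $v_i$ for each $u_i\in\mathcal{U}$, and a ``set vertex'' $z_j$ for each $F_j\in\mathcal{F}$. The safe edges are $\{v_i,t\}$ for $i\in[n]$, and the vulnerable edges are $\{s,z_j\}$ for $j\in[m]$ together with $\{z_j,v_i\}$ for every pair with $u_i\in F_j$. I set $k=m-1$ and $\ell=2m+d$. The key structural observation is that every $s$-$t$-path in the underlying graph has the shape $s\to z_j\to v_i\to t$ with $u_i\in F_j$, so it uses exactly two vulnerable edges and one safe edge.

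Correctness will rely on \cref{lem:sol}, which recasts a solution as $k+1=m$ paths pairwise intersecting only in safe edges, with total edge count at most $\ell$. From a hitting set $H$ with $|H|\le d$, I select one $u_{i_j}\in H\cap F_j$ per set and combine the $m$ paths $s\to z_j\to v_{i_j}\to t$; the $2m$ vulnerable edges used are all distinct, while the safe edges $\{v_{i_j},t\}$ range over a set of size at most $|H|$, giving total cost at most $2m+d=\ell$. Conversely, given such $m$ paths in a solution, no two can share the vulnerable edge $\{s,z_j\}$, so their set-indices $j$ are pairwise distinct and exhaust $[m]$; the element endpoints of the paths thus hit every $F_j$, and since distinct element endpoints contribute distinct safe edges to the cost, at most $\ell-2m=d$ distinct elements appear, yielding a hitting set of size at most $d$.

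The main subtlety is forcing the guaranteed paths to distribute across all $m$ sets rather than concentrate inside a few large $F_j$'s; the dedicated vulnerable edge $\{s,z_j\}$ is precisely the mechanism that achieves this, because any two paths through $z_j$ must share that single edge. With this distribution in place, the budget $\ell=2m+d$ becomes tight and faithfully encodes the hitting-set constraint, and since $\SSS=n$ is linear in the source parameter, the transformation is a valid polynomial parameter transformation.
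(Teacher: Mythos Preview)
Your construction is essentially the mirror image of the paper's (you attach safe edges at $t$ rather than at $s$), and the overall strategy is the same. However, your ``key structural observation'' is false: the graph you build admits longer $s$-$t$-paths of the form $s,z_{j_1},v_{i_1},z_{j_2},v_{i_2},\ldots,z_{j_r},v_{i_r},t$ whenever consecutive sets share elements. This breaks step~(3) of your reverse direction: for the path starting at $z_j$, the terminal element vertex $v_{i_r}$ need only lie in $F_{j_r}$, not in $F_j$, so your set of ``element endpoints'' is not obviously a hitting set.

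The fix is a shortcut argument (the paper does an analogous normalization). Given $m$ paths as in \cref{lem:sol}, whenever some $P_a$ has length $>3$, replace it by $P'_a=(s,z_{j_1},v_{i_1},t)$. The two vulnerable edges of $P'_a$ were already the first two edges of $P_a$, hence are not used by any other path, so pairwise vulnerable-disjointness is preserved. This deletes at least $2$ vulnerable edges from the solution (those on the tail of $P_a$, which no other path uses) and adds at most one safe edge $\{v_{i_1},t\}$, so the total cost does not increase. Iterating, you may assume every path has length exactly~$3$; now your argument goes through verbatim, since the unique element vertex on the path through $z_j$ lies in $F_j$, and the number of distinct such elements is at most the number of safe edges in the solution, which is at most $\ell-2m=d$.
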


 \begin{proof}
     We present a polynomial parameter transformation from \textsc{Hitting Set} parameterized by the size of the universe.
     This problem is known to not admit a polynomial kernel~\cite{DLS09}.
     To this end, let~$(\mathcal{U},\mathcal{F},d)$ be an instance of \textsc{Hitting Set}.
     We build a graph as follows.
     We have a vertex~$u_x$ for each element~$x \in \mathcal{U}$ and a vertex~$v_F$ for each set~$F \in \mathcal{F}$.
     We have a vulnerable edge between~$u_x$ and~$v_F$ if and only if~$x \in F$.
     Moreover, we add two vertices~$s$ and~$t$ to~$G$ and add vulnerable edges between~$t$ and each vertex~$v_F$ and safe edges between~$s$ and each vertex~$u_x$.
     Finally, we set~$\ell = 2|\mathcal{F}| + d$ and~$k = |\mathcal{F}|-1$. 
     See \cref{fig:param_S} for an illustration.
\begin{figure}[t]
    \centering
    \includegraphics[scale=0.9]{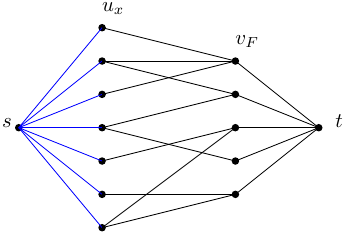}
   \caption{The graph $G$ in the proof of \cref{prop:nopolyS}. Safe edges are represented in blue and vulnerable edges in black.}
    \label{fig:param_S}
\end{figure}

     Since the reduction can clearly be computed in polynomial time and since the number of safe edges is exactly~$|\mathcal{U}|$, it only remains to show that the two instances are equivalent.
     
     To this end, first assume that there exists a hitting set~$H \subseteq \mathcal{U}$ of size at most~$d$.
     Then, we pick a solution by picking the at most~$d$ safe edges in~$\{\{s,u_x\} \mid x \in H\}$ corresponding to~$H$ and for each set~$F \in \mathcal{F}$, we pick the edges~$\{v_F,t\}$ and an arbitrary edge~$\{u_x,v_F\}$ for some~$x \in H \cap F$.
     Note that since~$H$ is a hitting set, such an element always exists.
     Moreover, removing at most~$k$ vulnerable edges from the solution leaves at least one vertex~$v_F$ with both incident vulnerable edges in the remaining solution.
     Since by construction, the two neighbors are~$t$ and a vertex~$u_x$ with~$x \in H$ and the edge~$\{s,u_x\}$ is both safe and contained in the solution, it holds that there is still an~$s$-$t$-path.

     For the other direction, assume that there is a solution of size~${\ell = 2|\mathcal{F}| + d}$ in the constructed instance of \ftstc.
     Then, this solution contains all~${k+1 = |\mathcal{F}|}$ edges incident to~$t$ and at least~$k+1$ vulnerable edges of the form~$\{u_x,v_F\}$ (as otherwise all remaining such edges are a cut between~$s$ and~$t$).
     Hence, the solution contains at most~$d$ safe edges.
     We assume without loss of generality that all vertices~$u_x$ that are connected to~$s$ in the solution are connected to~$s$ via a direct (safe) edge.
     If this is not the case, then there is some vertex~$v_F$ that is connected to~$s$ via a different vertex~$u_{x'}$ and the edge~$\{u_x,v_F\}$ is contained in the solution.
     Note that replacing~$\{u_x,v_F\}$ by~$\{s,u_x\}$ results in another solution.
     Now, let~$H' \subseteq \mathcal{U}$ be the set of elements such that for each~$x \in H'$, the (safe) edge~$\{s,u_x\}$ is contained in the solution.
     By construction, it holds that~$|H'| \leq d$.
     Assume towards a contradiction that~$H'$ is not a hitting set.
     Then, there exists a set~$F' \in \mathcal{F}$ such that~$H' \cap F' = \emptyset$.
     Removing the~$k$ vulnerable edges~$\{v_F,t\}$ for all~$F \neq F'$ disconnects~$s$ from~$t$ as~$t$ is only connected to~$v_{F'}$ and by construction, all neighbors of~$v_{F'}$ in the solution are not contained in~$H'$ and thus not connected to~$s$.
     This concludes the proof.
 \end{proof}

\section{Conclusion}
We analyzed the parameterized complexity of \ftstcshort{} with respect to a collection of natural parameters.
For these parameters, we gave an almost complete classification in terms of polynomial kernels, FPT, XP, and para-NP-hardness.
Our results are summarized in \cref{fig:results}.

The main open question remaining is the exact complexity for the parameter~$k$ (the required redundancy).
Adjiashvili et al.~\cite{AHMS22} showed that the problem is polynomial-time solvable for~$k=1$.
\Cref{prop:W1} shows that the problem is W[1]-hard.
Whether the problem is in XP or para-NP-hard and even whether the problem is polynomial-time solvable for~${k=2}$ remain open.
Adjiashvili et al.~\cite{AHMS22} also showed that \ftstcshort{} is in XP when parameterized by~$k$ if the input graph is a DAG (a directed acyclic graph).
Whether this can be improved to FPT or shown to be W[1]-hard is also an interesting open problem.

Finally, we mention that there is still a lot of room for investigating structural parameters like the treewidth of the input graph.
Moreover, we believe that \dsl{} is a problem of independent interest that deserves further investigation.

\section*{Acknowledgements}
The research leading to these results has received funding from the Research Council of Norway via the project BWCA (grant no. 314528) and the Franco-Norwegian AURORA project (grant no. 349476), 
the European Research Council (ERC) under the European Union’s Horizon 2020 research and innovation programme (grant agreement No. 819416), and the French ANR project ELIT (ANR-20-CE48-0008).

\end{document}